\documentclass[aps,pra,twocolumn,10pt,nofootinbib]{revtex4-2}

\usepackage[utf8]{inputenc}
\usepackage[T1]{fontenc}
\usepackage{amsmath,amssymb,mathtools}
\usepackage{amsthm}
\usepackage[hidelinks]{hyperref}

\newtheorem{theorem}{Theorem}[section]

\newtheorem{proposition}{Proposition}

\begin{document}

\title{Exact series formulas for the capacities of the amplitude damping channel}

\author{Stefano Pirandola}
\affiliation{Department of Computer Science, University of York,
York YO10 5GH, United Kingdom}

\begin{abstract}
We derive explicit, absolutely convergent series formulas for the quantum, unassisted classical, and entanglement-assisted classical capacities of the qubit amplitude damping channel, eliminating residual optimizations and implicit roots. We also obtain analytical upper and lower bounds on two-way assisted quantum and private communication capacities by evaluating a balanced-squashing bound and optimizing the reverse coherent information. Our results characterize the optimal input populations, establish convergence and truncation properties, and provide analytical benchmarks for quantum and classical communication over dissipative channels.
\end{abstract}

\maketitle

\section{Introduction}

Determining the information-carrying capacities of noisy quantum
channels is a central problem in quantum Shannon theory. The
quantum capacity characterizes the maximum asymptotic rate of
reliable quantum communication
\cite{NielsenChuang2010,Simeone2026,Serafini2017,Weedbrook2012}
and is generally expressed in terms of the regularized coherent
information~\cite{Lloyd1997,Devetak2005}. Its evaluation
therefore involves both an optimization over input states and a
regularization over multiple channel uses. Although the latter
simplifies for certain channels, obtaining explicit expressions
without residual optimizations remains challenging. Similar
difficulties arise in determining classical communication
capacities and analytical bounds on assisted communication rates.

The qubit amplitude damping channel provides a fundamental model
of irreversible energy relaxation in a two-level quantum system.
It is also a useful setting in which to investigate these
different communication limits. In its degradable regime, the
quantum capacity reduces to the single-letter coherent
information~\cite{DevetakShor2005,WolfPerezGarcia2007}.
However, its standard expression retains a scalar optimization
over the input excited-state population
\cite{GiovannettiFazio2005,WolfPerezGarcia2007}.
Eliminating this optimization provides an explicit analytical
benchmark and enables the evaluation of related communication
bounds.

In this work, we derive an exact series formula for the quantum
capacity of the amplitude damping channel. We first transform
the stationarity condition into a scalar root equation, whose
unique positive solution determines both the capacity and the
optimal input population. Applying Lagrange-B\"urmann inversion
\cite{Gessel2016}, we obtain an explicit series containing
neither a residual optimization nor an implicitly defined root.
We establish absolute convergence at the physical parameter
value, including the boundary of the convergence disk, and
characterize the endpoint and threshold behavior.

We next exploit this solution to obtain analytical bounds on
two-way assisted quantum communication, entanglement
distribution, and secret-key generation. In particular, the
balanced amplitude damping squashing construction yields an
upper bound that coincides with the quantum capacity of an
amplitude damping channel with half the damping probability.
We complement this result by analytically optimizing the
reverse coherent information, obtaining an explicit lower
bound. Together, these expressions provide optimization-free
bounds on the corresponding assisted capacities.

We then extend the inversion approach to the unassisted
classical capacity $C$ and the entanglement-assisted classical
capacity $C_E$. Their respective single-letter
characterizations
\cite{TangZhuBaiWang2026,BennettShorSmolinThapliyal2002}
also involve scalar input optimizations. By introducing a
common normalized moment formalism, we derive exact series
for both capacities using the same coefficient polynomials.
We establish absolute convergence throughout the interior
of the physical damping range and obtain explicit expressions
for the optimal input populations and controlled truncation
errors. This common framework provides a unified analytical
treatment of the quantum and classical communication limits
of the amplitude damping channel.

The paper is organized as follows. Section~\ref{sec:AD}
reviews the amplitude damping channel, while
Sec.~\ref{sec:closed} presents the quantum-capacity series
and its limiting behavior. Section~\ref{sec:twoway}
establishes the two-way assisted capacity bounds.
Sections~\ref{capuni:sec-classical} and
\ref{capuni:sec-ea} present the classical and
entanglement-assisted classical capacity formulas,
respectively. Section~\ref{sec:conclusion} concludes.
The detailed derivations are collected in
Appendices~\ref{app:proof}, \ref{app:RCI},
and~\ref{capuni:app}.

\section{Amplitude damping channel and quantum capacity}
\label{sec:AD}

The qubit amplitude damping channel $\mathcal{A}_{\gamma}$ has
Kraus operators
\begin{equation}
E_0=
\begin{pmatrix}
1&0\\
0&\sqrt{1-\gamma}
\end{pmatrix},
\qquad
E_1=
\begin{pmatrix}
0&\sqrt{\gamma}\\
0&0
\end{pmatrix},
\label{eq:kraus}
\end{equation}
where $0\leq\gamma\leq1$ is the damping probability. Its action is
\begin{equation}
\mathcal{A}_{\gamma}(\rho)
=E_0\rho E_0^{\dagger}+E_1\rho E_1^{\dagger}.
\end{equation}
A Stinespring isometry is specified by
\begin{align}
|0\rangle&\longmapsto|0\rangle_B|0\rangle_E,
\nonumber\\
|1\rangle&\longmapsto
\sqrt{1-\gamma}\,|1\rangle_B|0\rangle_E
+\sqrt{\gamma}\,|0\rangle_B|1\rangle_E.
\label{eq:stinespring}
\end{align}
With these output bases, the complementary channel is
$\widetilde{\mathcal{A}}_{\gamma}=\mathcal{A}_{1-\gamma}$.

For $0\leq\gamma\leq1/2$, the channel is degradable, and its quantum
capacity is \cite{DevetakShor2005,WolfPerezGarcia2007}
\begin{equation}
Q(\mathcal{A}_{\gamma})
=\max_{\rho}I_c(\rho,\mathcal{A}_{\gamma}),
\label{eq:coherent}
\end{equation}
where
\begin{equation}
I_c(\rho,\mathcal{A}_{\gamma})
:=S[\mathcal{A}_{\gamma}(\rho)]
-S[\widetilde{\mathcal{A}}_{\gamma}(\rho)],
\end{equation}
with $S(\rho):=-\operatorname{Tr}(\rho\log_2\rho)$ being the
von Neumann entropy.

Coherent information is concave in the input for degradable
channels. Together with phase covariance, this implies that
dephasing the input in the energy basis cannot decrease the
objective. It is therefore sufficient to consider
\begin{equation}
\rho_u=(1-u)|0\rangle\langle0|+u|1\rangle\langle1|,
\qquad 0\leq u\leq1.
\label{eq:diagonal}
\end{equation}
For this input, the excitation probabilities at the output and in
the environment are $(1-\gamma)u$ and $\gamma u$, respectively.
Thus
\begin{equation}
I_c(u,\gamma)=h_2[(1-\gamma)u]-h_2(\gamma u),
\label{eq:Ic}
\end{equation}
where
\begin{equation}
h_2(x):=-x\log_2x-(1-x)\log_2(1-x)
\label{eq:binaryentropy}
\end{equation}
is the binary entropy, with continuous endpoint values
$h_2(0)=h_2(1)=0$. We also use the entropy in nats,
\begin{equation}
H(x):=(\ln2)h_2(x).
\label{eq:naturalentropy}
\end{equation}
The capacity becomes
\begin{equation}
Q(\mathcal{A}_{\gamma})
=\max_{0\leq u\leq1}
\big\{h_2[(1-\gamma)u]-h_2(\gamma u)\big\},
\label{eq:Qmax}
\end{equation}
for $0\leq\gamma\leq1/2$
\cite{GiovannettiFazio2005,WolfPerezGarcia2007}.

For $0<\gamma<1/2$ and $0<u<1$, the objective is strictly concave:
\begin{equation}
\frac{\partial^2 I_c(u,\gamma)}{\partial u^2}
=-\frac{1-2\gamma}
{\ln2\,u[1-(1-\gamma)u](1-\gamma u)}<0.
\label{eq:strictconcavity}
\end{equation}
It has a unique maximizing population in the interior of $[0,1]$. At $\gamma=0$, the
channel is the identity and the unique maximizing population is
$u=1/2$. At $\gamma=1/2$, the objective in Eq.~\eqref{eq:Qmax}
vanishes for every $u$. For $\gamma\geq1/2$, the channel is
antidegradable with zero quantum capacity~\cite{WolfPerezGarcia2007}.

\section{Explicit quantum-capacity formula}
\label{sec:closed}

For $0<\gamma<1/2$, define
\begin{equation}
k:=\frac{\gamma}{1-2\gamma},
\qquad
c_k:=\frac{k^k}{(k+1)^{k+1}}.
\label{eq:kckQ}
\end{equation}
The following result evaluates the maximization in
Eq.~\eqref{eq:Qmax}.

\begin{theorem}[Quantum-capacity series]
\label{thm:closedQ}
For $0<\gamma<1/2$, the quantum capacity of the amplitude damping
channel is
\begin{equation}
Q(\mathcal{A}_{\gamma})
=\frac{1}{\ln2}
\sum_{n=1}^{\infty}\frac{(-1)^{n-1}}{n}
\binom{(k+1)n-1}{n-1}c_k^n,
\label{eq:Qseries}
\end{equation}
where $k$ and $c_k$ are given by Eq.~\eqref{eq:kckQ}. This series
is absolutely convergent for every fixed $0<\gamma<1/2$.
\end{theorem}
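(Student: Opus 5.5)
The plan is to reduce the maximization in Eq.~\eqref{eq:Qmax} to a single algebraic equation of the form handled by the generalized binomial series, and then read off Eq.~\eqref{eq:Qseries} from a known Lagrange--B\"urmann expansion. Working in nats, set $a=(1-\gamma)u$, $b=\gamma u$, $x=(1-a)/a$, $y=(1-b)/b$. By the strict concavity \eqref{eq:strictconcavity}, the unique interior maximizer is characterized by the stationarity condition
\begin{equation}
(1-\gamma)\ln x=\gamma\ln y .
\end{equation}
Using the identity $H(p)=-\ln(1-p)+p\ln\frac{1-p}{p}$, the objective becomes
\begin{equation}
H(a)-H(b)=\ln\frac{1-b}{1-a}+u\big[(1-\gamma)\ln x-\gamma\ln y\big].
\end{equation}
Hence at the optimum
\begin{equation}
(\ln2)\,Q(\mathcal{A}_\gamma)=\ln\frac{1-\gamma u^\ast}{1-(1-\gamma)u^\ast}.
\end{equation}
This removes the entropies and leaves only the root $u^\ast$ to be determined.

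Next I would rewrite the stationarity condition, $x^{1-\gamma}=y^{\gamma}$, equivalently $x=y^{k/(k+1)}$ since $\gamma/(1-\gamma)=k/(k+1)$, as a scalar equation in one positive unknown. The target form is $B=1-c_kB^{k+1}$ with $B\in(0,1)$, where $B$ is a simple power of $(1-(1-\gamma)u^\ast)/(1-\gamma u^\ast)$, or of $x$ or $y$, chosen so that
\begin{equation}
(\ln2)\,Q=-\frac{1}{k+1}\ln B .
\end{equation}
I expect this change of variables to be the main obstacle. One must find the right power and normalization so that the constant comes out exactly as $c_k=k^k/(k+1)^{k+1}$, and check that the physical root is the branch analytic at $c=0$. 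I would first try eliminating $u$ between the definitions of $x$ and $y$, which gives a linear relation between $1/(1+x)$ and $1/(1+y)$. I would then parametrize $y=t^{k+1}$, $x=t^k$ and look for $B$ as a rational function of $t$. Uniqueness of the positive root then follows from the strict concavity already established.

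Given $B=1-c B^{t}$ with $t=k+1$, the generalized binomial series $\mathcal{B}_t(z)=1+z\mathcal{B}_t(z)^t$ gives $B=\mathcal{B}_t(-c_k)$. The standard Lagrange--B\"urmann consequence is
\begin{equation}
\ln\mathcal{B}_t(z)=\sum_{n\ge1}\binom{tn}{n}\frac{z^n}{n},
\end{equation}
and one has $\binom{tn-1}{n-1}=\binom{tn}{n}/t$. Combining these,
\begin{equation}
-\frac1t\ln\mathcal{B}_t(-c_k)=\sum_{n\ge1}\frac{(-1)^{n-1}}{n}\binom{tn-1}{n-1}c_k^n ,
\end{equation}
which is exactly Eq.~\eqref{eq:Qseries}. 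The identification of the analytic branch with the physical root must be justified all the way to $z=-c_k$. I would do this by continuity in $z\in[-c_k,0]$, using that the implicit function theorem applies there because $1+tcB^{t-1}\neq0$.

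For convergence, the radius of convergence of $\mathcal{B}_t$ is $(t-1)^{t-1}/t^t=c_k$, so the physical point sits on the boundary of the disk and the ratio test is inconclusive. Instead, Stirling's formula gives $\binom{tn}{n}c_k^n\sim C\,n^{-1/2}$. Hence the $n$th term is $O(n^{-3/2})$, which proves absolute convergence for every fixed $0<\gamma<1/2$ and also justifies Abel's theorem for evaluating the series at the boundary point.
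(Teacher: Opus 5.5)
Your overall route is the paper's. The identity for $H$ reduces the optimum to $(\ln2)Q=\ln z_k$ with $z_k:=\frac{1-\gamma u^*}{1-(1-\gamma)u^*}$. A scalar root equation is then inverted by Lagrange--B\"urmann, and Stirling plus Abel's theorem handle the boundary point. Your continuation along $[-c_k,0]$ and the $n^{-3/2}$ estimate are fine. The middle step, however, contains two errors by the same factor $t=k+1$, and they cancel only by accident.

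\textbf{The target relation cannot be met.} Your parametrization does work; write $s$ for its parameter to avoid the clash with $t=k+1$. In your notation, $a/b=(k+1)/k$ gives $s^k(ks-k-1)=1$ and $z_k=ks/(k+1)$, so the stationarity condition becomes exactly $(z_k-1)z_k^k=c_k$. Hence $B=1/z_k$ is the unique root of $B=1-c_kB^{k+1}$ in $(0,1)$. Then $(\ln2)Q=\ln z_k=-\ln B$, not $-\frac{1}{k+1}\ln B$. Because that root is unique, no other power or normalization of $B$ can satisfy both of your requirements.

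\textbf{The quoted expansion is wrong.} With $w=\mathcal{B}_t(z)-1=z(1+w)^t$ and $G(s)=\ln(1+s)$, Lagrange--B\"urmann gives $[z^n]\ln\mathcal{B}_t(z)=\frac1n[s^{n-1}](1+s)^{tn-1}=\frac1n\binom{tn-1}{n-1}=\frac{1}{tn}\binom{tn}{n}$. As a check at $t=2$: $\ln\mathcal{B}_2(z)=z+\frac32z^2+\cdots$, not $2z+3z^2+\cdots$. So your display for $-\frac1t\ln\mathcal{B}_t(-c_k)$ is also off by a factor $1/t$.

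Had you carried out the change of variables correctly and then used the expansion as you stated it, you would have obtained $(k+1)$ times the correct series. The repair is immediate: take $B=1/z_k$, write $(\ln2)Q=-\ln\mathcal{B}_{k+1}(-c_k)$, and use $\ln\mathcal{B}_t(z)=\sum_{n\ge1}\frac1n\binom{tn-1}{n-1}z^n$. This gives the claimed series directly. It is precisely the paper's computation under the reflection $z\mapsto -z$: the paper expands $\ln(1+w)$ along the branch of $w=c\,(1+w)^{-k}$ analytic at $c=0$ and evaluates at $c=c_k$, and $1+w=1/\mathcal{B}_{k+1}(-c)$.
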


For the generalized binomial coefficients appearing here,
\begin{equation}
\binom{x}{m}:=
\frac{\Gamma(x+1)}{\Gamma(m+1)\Gamma(x-m+1)},~~m=0,1,2,\ldots,
\end{equation}
where $\Gamma$ is the gamma function. All gamma-function arguments
in Eq.~\eqref{eq:Qseries} are positive. The first terms are
\begin{align}
Q(\mathcal{A}_{\gamma})=\frac{1}{\ln2}\bigg[
&c_k-\frac{2k+1}{2}c_k^2
\nonumber\\
&+\frac{(3k+2)(3k+1)}{6}c_k^3
\nonumber\\
&-\frac{(4k+3)(4k+2)(4k+1)}{24}c_k^4
+\cdots\bigg].
\label{eq:firstterms}
\end{align}
Equation~\eqref{eq:Qseries} is explicit in the damping probability
and contains no residual optimization or implicitly defined root.
The series converges algebraically, and Euler's transformation
provides a practical acceleration for numerical evaluation\footnote{
Writing $Q=\sum_{n=1}^{\infty}(-1)^{n-1}b_n$, where $b_n$ is
the positive term magnitude in Eq.~\eqref{eq:Qseries}, including
the factor $1/\ln2$,
Euler's transformation gives the equivalent representation
$Q=\sum_{m=0}^{\infty}2^{-m-1}\Delta^m b_1$.
Here $\Delta b_n:=b_n-b_{n+1}$ and $\Delta^0b_n:=b_n$.
The transformation preserves the infinite sum and accelerates
its numerical evaluation.}.

The proof of Theorem~\ref{thm:closedQ} is given in Appendix~\ref{app:proof}. It first identifies the unique
root $z_k>1$ of
\begin{equation}
(z_k-1)z_k^k=c_k
\label{eq:rootpreview}
\end{equation}
and establishes the equivalent root representation
\begin{equation}
Q(\mathcal{A}_{\gamma})=\log_2z_k.
\label{eq:rootcapacitypreview}
\end{equation}
For $0<\gamma<1/2$, the same root characterizes the unique
maximizing population:
\begin{equation}
u_*(\gamma)=
\frac{(2k+1)(z_k-1)}{1+(k+1)(z_k-1)}.
\label{eq:optimalpopulation}
\end{equation}
This formula also becomes explicit through the series by using
$z_k=2^{Q(\mathcal{A}_{\gamma})}$ with
Eq.~\eqref{eq:Qseries}. The series and root representations thus
give equivalent characterizations of the optimizing population.

The endpoint capacities are
\begin{equation}
Q(\mathcal{A}_0)=1,
\qquad Q(\mathcal{A}_{1/2})=0,
\label{eq:endpoints}
\end{equation}
and the interior formulas approach these values continuously. In
particular, $u_*(\gamma)\to1/2$ as $\gamma\to0^+$.
At the other endpoint, introduce
\begin{equation}
\omega:=W_0(e^{-1}),
\qquad \omega e^{\omega}=e^{-1},
\label{eq:omega}
\end{equation}
where $W_0$ is the principal real branch of the Lambert function.
Appendix~\ref{app:proof} shows that, for $\gamma\to\tfrac12^-$, we have 
\begin{align}
Q(\mathcal{A}_{\gamma})
&=\frac{2\omega}{\ln2}(1-2\gamma)
+o(1-2\gamma), \label{eq:thresholdQ} \\
u_*(\gamma)
&\rightarrow\frac{2\omega}{1+\omega}
\simeq0.4356234114.
\label{eq:thresholdpopulation}
\end{align}
The limiting population in Eq.~\eqref{eq:thresholdpopulation}
describes the approach from the interior. At the threshold itself,
every population maximizes the identically zero coherent-information
objective.

\section{Application to two-way assisted capacities}
\label{sec:twoway}

We now consider unlimited two-way classical assistance for the
physical channel $\mathcal{A}_{\gamma}$, with $0\leq\gamma\leq1$.
Let $Q_2$, $D_2$, and $K_2$ denote its two-way assisted quantum,
entanglement-distribution, and secret-key capacities, respectively.
Classical communication is public in the definition of $K_2$, and no
entanglement or secret key is initially shared\footnote{Note that $K_2$ is also often denoted by $K$, since secret-key
agreement typically allows unlimited two-way public classical
communication.}. These capacities
satisfy $Q_2=D_2\leq K_2$.

The squashed entanglement of a bipartite state is one half of the
infimum of the quantum conditional mutual information over
extensions of that state \cite{ChristandlWinter2004,TakeokaGuhaWilde2014}. Maximizing
the squashed entanglement between a reference system and the channel
output over purified channel inputs defines the channel quantity
$E_{\mathrm{sq}}$. The channel converse gives
\begin{align}
Q_2(\mathcal{A}_{\gamma})
=D_2(\mathcal{A}_{\gamma})
&\leq K_2(\mathcal{A}_{\gamma}) \leq E_{\mathrm{sq}}(\mathcal{A}_{\gamma}).
\label{eq:capacityhierarchy}
\end{align}
A fixed squashing channel acting on the environment supplies an
upper bound on $E_{\mathrm{sq}}$ by maximizing one half of the
corresponding conditional mutual information over the channel input.

For the balanced amplitude damping squashing construction, whose
squashing channel has damping probability $1/2$, the channel
squashed entanglement satisfies
$E_{\mathrm{sq}}(\mathcal{A}_{\gamma})\leq U(\gamma)$, where\footnote{Supplementary Eq.~(S231) of Ref.~\cite{PLOB2017} gives the correct
input-state maximization, which is equivalent to
Eq.~\eqref{eq:twowaymax} here (with a different notation for the
damping probability and the input population). An imprecision occurs
in the subsequent removal of this maximization: the objective is
stated there to be symmetric in the input population, so that the
maximizer is set to $1/2$. This leads to Supplementary Eq.~(S232) and
to Eq.~(49) of Ref.~\cite{PLOB2017}. In general, however, the
objective is not symmetric in the input population and its maximizer
need not be $1/2$. Theorem~\ref{thm:twowaybound} retains and evaluates
this maximization exactly, thereby correcting this simplification
while leaving the preceding balanced-squashing construction
unchanged.}
\begin{equation}
U(\gamma):=
\max_{0\leq u\leq1}
\left\{
h_2\!\left[\left(1-\frac{\gamma}{2}\right)u\right]
-h_2\!\left(\frac{\gamma u}{2}\right)
\right\}.
\label{eq:twowaymax}
\end{equation}

The balanced-squashing bound can be evaluated through the ordinary
quantum capacity of an effective amplitude damping channel. In fact, by comparing Eqs.~\eqref{eq:Qmax} and~\eqref{eq:twowaymax}, we note the
relevant identity
\begin{equation}
U(\gamma)=Q(\mathcal{A}_{\gamma/2}),
\label{eq:fundamentalidentity}
\end{equation}
which relates the two input optimizations throughout the physical
damping range. Together with the reverse-coherent-information
(RCI) achievable rate~\cite{GarciaPatron2009,Pirandola2009SecretKey}, evaluated in
Appendix~\ref{app:RCI}, this gives the following result.

\begin{theorem}[Series bounds for two-way assisted capacities]
\label{thm:twowaybound}
For every $0\leq\gamma\leq1$ and any two-way assisted capacity
$\mathcal{C}_2\in\{Q_2,D_2,K_2\}$,
\begin{equation}
Q(\mathcal{A}_{\gamma/(1+\gamma)})
\leq\mathcal{C}_2(\mathcal{A}_{\gamma})
\leq Q(\mathcal{A}_{\gamma/2}).
\label{eq:twowayQbound}
\end{equation}
For $0<\gamma<1$, the upper bound is given by the absolutely
convergent series in Theorem~\ref{thm:closedQ}, with
\begin{align}
k=\kappa_{\gamma}:=\frac{\gamma}{2(1-\gamma)},~~c_k=\frac{\kappa_{\gamma}^{\kappa_{\gamma}}}
{(\kappa_{\gamma}+1)^{\kappa_{\gamma}+1}}.
\label{eq:effectiveparameters}
\end{align}
The lower bound is the optimized RCI rate and is given by the same
series with $k=2\kappa_\gamma$ and $c_k$ as in
Eq.~\eqref{eq:kckQ}. Both bounds equal one at $\gamma=0$ and
zero at $\gamma=1$.
\end{theorem}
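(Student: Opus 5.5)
The proof splits into the upper bound, the lower bound, and the parameter bookkeeping.

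\textbf{Upper bound.} The upper bound follows from the chain in Eq.~\eqref{eq:capacityhierarchy} together with the balanced-squashing estimate $E_{\mathrm{sq}}(\mathcal{A}_{\gamma})\leq U(\gamma)$. What remains is to verify the identity~\eqref{eq:fundamentalidentity}. Substituting $\gamma\to\gamma/2$ in Eq.~\eqref{eq:Qmax} gives $h_2[(1-\gamma/2)u]-h_2(\gamma u/2)$ maximized over $u\in[0,1]$, which is literally the objective of Eq.~\eqref{eq:twowaymax}. Because $0\leq\gamma\leq1$ implies $0\leq\gamma/2\leq1/2$, we stay within the degradable regime where Eq.~\eqref{eq:Qmax} is the capacity. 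For $0<\gamma<1$ we have $\gamma/2\in(0,1/2)$, so Theorem~\ref{thm:closedQ} applies with $k=(\gamma/2)/(1-\gamma)=\kappa_\gamma$, which is Eq.~\eqref{eq:effectiveparameters}.

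\textbf{Lower bound.} The RCI is an achievable rate for $D_2$, and hence for $Q_2=D_2\leq K_2$. For a diagonal input $\rho_u$, the reverse coherent information is $S(\rho_u)-S[\widetilde{\mathcal{A}}_\gamma(\rho_u)]=h_2(u)-h_2(\gamma u)$. By the same concavity and phase-covariance argument used before Eq.~\eqref{eq:diagonal}, it suffices to maximize this over $u$; this is the step I would check in Appendix~\ref{app:RCI}. The key step is a change of variables that puts the objective into the form of Eq.~\eqref{eq:Qmax}. Set $\gamma'=\gamma/(1+\gamma)$ and $u=(1-\gamma')v=v/(1+\gamma)$. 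Then $h_2(\gamma u)=h_2(\gamma' v)$ and $h_2(u)=h_2[(1-\gamma')v]$. As $u$ ranges over $[0,1]$, $v=(1+\gamma)u$ ranges over $[0,1+\gamma]$, which is wider than $[0,1]$. However, $h_2[(1-\gamma')v]$ requires $(1-\gamma')v\leq1$, and this is exactly the condition $u\leq1$.

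The main obstacle is therefore to show that the maximizer lies at $v\leq1$. I would first try the explicit maximizer of Eq.~\eqref{eq:optimalpopulation}: it lies in $(0,1)$ and is the unique stationary point of the strictly concave objective. Extended to $v\in[0,1+\gamma]$, the objective $h_2[(1-\gamma')v]-h_2(\gamma' v)$ is still strictly concave wherever both arguments are in $[0,1]$, because the second-derivative formula~\eqref{eq:strictconcavity} persists. Hence the interior stationary point at $v<1$ is the global maximizer on the enlarged interval. If a gap remains, it suffices to note that the RCI maximum is at least the value at $v=v_*\leq1$, which already gives the claimed lower bound $\mathcal{C}_2\geq Q(\mathcal{A}_{\gamma'})$. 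Equality with the optimized RCI also holds by the concavity argument. Finally, $\gamma'\leq1/2$ for $\gamma\leq1$, and $k=\gamma'/(1-2\gamma')=\gamma/(1-\gamma)=2\kappa_\gamma$.

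\textbf{Endpoints.} At $\gamma=0$ both bounds equal $Q(\mathcal{A}_0)=1$. At $\gamma=1$ both effective damping parameters equal $1/2$, so Eq.~\eqref{eq:endpoints} gives zero for each.
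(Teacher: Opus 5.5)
Your proof is correct. The upper bound and the endpoint values follow the paper's argument essentially verbatim, but your lower bound takes a genuinely different route.

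The paper (Appendix~\ref{app:RCI}) treats $R_{\mathrm{RCI}}(\gamma)=\max_u\{h_2(u)-h_2(\gamma u)\}$ as a separate problem. It derives its own root equation $z_\gamma=c_\gamma(1-z_\gamma)^{1/(1-\gamma)}$ and performs a second Lagrange--B\"urmann inversion, with its own Stirling and Abel boundary analysis. Only afterwards does it identify $R_{\mathrm{RCI}}(\gamma)=Q(\mathcal{A}_{\gamma/(1+\gamma)})$, by noticing that the resulting series has $k=\gamma/(1-\gamma)$ and $c_k=c_\gamma$.

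Your substitution $u=v/(1+\gamma)$ maps the RCI objective exactly onto $I_c(v,\gamma')$, with $\gamma'=\gamma/(1+\gamma)$, on the enlarged interval $[0,1+\gamma]$. You handle the domain issue correctly. For $0<v<1+\gamma$ one has $1-(1-\gamma')v>0$ and $1-\gamma'v>1-\gamma>0$, so Eq.~\eqref{eq:strictconcavity} still gives strict concavity for $0<\gamma<1$. Hence the stationary point $v_*=u_*(\gamma')\in(0,1)$ is the global maximizer on the whole enlarged interval. Evaluating at $v_*$ alone already gives the inequality for all $\gamma\in[0,1]$.

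What your route buys:
\begin{itemize}
\item It explains structurally why the two series coincide.
\item It lets Theorem~\ref{thm:closedQ} carry all the analytic work.
\item It gives the RCI-optimal population directly as $u_*(\gamma')/(1+\gamma)$. This reproduces Eq.~\eqref{eq:RCIustar} via $z_k=1/(1-z_\gamma)$ at $k=\gamma/(1-\gamma)$.
\end{itemize}
The paper's route, in exchange, gives a characterization written directly in $\gamma$ that does not depend on Theorem~\ref{thm:closedQ}.

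One wording point: the reduction of the RCI to diagonal inputs does not rest on the degradability-based concavity used before Eq.~\eqref{eq:diagonal}. It rests on concavity of $S(B|E)$, which holds for every $\gamma$. For the inequality itself you do not need the reduction at all, since any fixed input already gives an achievable rate.
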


\noindent\textit{Proof.}
For every $\gamma\in[0,1]$, the effective damping probability
$\gamma/2$ belongs to $[0,1/2]$. Hence
$\mathcal{A}_{\gamma/2}$ is degradable, and its quantum capacity is
given by Eq.~\eqref{eq:Qmax}. Replacing the damping probability
in that equation by $\gamma/2$ gives precisely the maximization
defining $U(\gamma)$ in Eq.~\eqref{eq:twowaymax}, proving
Eq.~\eqref{eq:fundamentalidentity}. Combining this identity with
the capacity hierarchy in Eq.~\eqref{eq:capacityhierarchy} and the
bound $E_{\mathrm{sq}}(\mathcal{A}_{\gamma})\leq U(\gamma)$ stated
above establishes the upper bound in Eq.~\eqref{eq:twowayQbound}.
For $0<\gamma<1$,
Theorem~\ref{thm:closedQ} applies to the effective channel, and
substitution of $\gamma/2$ into Eq.~\eqref{eq:kckQ} yields
Eq.~\eqref{eq:effectiveparameters}. The lower bound and its series
evaluation are proved in Appendix~\ref{app:RCI}. The endpoint values
follow from Eq.~\eqref{eq:endpoints}.
\hfill$\square$

The identity in Eq.~\eqref{eq:fundamentalidentity} also determines
the optimizing input. For $0<\gamma<1$, the balanced-squashing
objective has the unique maximizing population
\begin{equation}
u_*^{\mathrm{sq}}(\gamma)=u_*(\gamma/2),
\label{eq:optimalpopulationtwoway}
\end{equation}
where $u_*$ is characterized by Eq.~\eqref{eq:optimalpopulation}.
At $\gamma=0$, the unique maximizing population is $u=1/2$.
At $\gamma=1$, the objective vanishes identically, so every
$u\in[0,1]$ is a maximizer.

The same correspondence transfers the threshold behavior of the
ordinary quantum capacity to the strong-damping behavior of the
bound, so Eq.~\eqref{eq:thresholdQ} gives
\begin{equation}
U(\gamma)=\frac{2\omega}{\ln2}(1-\gamma)+o(1-\gamma),~~\gamma\to1^-,
\label{eq:thresholdU}
\end{equation}
where $\omega=W_0(e^{-1})$ and
$2\omega/\ln2\simeq0.8034788298$.

The upper bound in Theorem~\ref{thm:twowaybound} evaluates the
input-state optimization for the specified balanced squashing channel.
The resulting quantity $U(\gamma)$ bounds both the channel squashed entanglement
and the two-way assisted capacities. Establishing
tightness of this bound or optimizing over more general squashing
channels requires a separate analysis.

\section{Classical capacity}
\label{capuni:sec-classical}

The classical-capacity characterization in Theorem~5.5 of
Ref.~\cite{TangZhuBaiWang2026} admits the following explicit series.
For $0<\gamma<1$, write $a=1-\gamma$, $b=\gamma a$, and set
\begin{align}
d&=\sqrt{1-b},\qquad \ell=\ln\frac{1+d}{1-d},
\label{capuni:classical-auxiliary}\\
\delta_{\mathrm C}
&=\frac{b\ell}{d}-a\ln\frac{1+\gamma}{a}.
\label{capuni:classical-explicit-parameter}
\end{align}
Define the elementary functions
\begin{align}
D(z)&=\sqrt{1-b(1-z)^2},\notag\\
\Phi_{\mathrm C}(z)
&=a\ln\frac{1+\gamma+az}{(1+\gamma)(1-z)}
  +\frac{b\ell}{d}\notag\\
&\quad-\frac{b(1-z)}{D(z)}\ln\frac{1+D(z)}{1-D(z)},
\label{capuni:classical-explicit-kernel}
\end{align}
and the coefficients
\begin{equation}
c_n=\frac1n[z^{n-1}]
 \left(\frac{z}{\Phi_{\mathrm C}(z)}\right)^n,
\qquad n\ge1.
\label{capuni:classical-explicit-coefficients}
\end{equation}
Here $[z^j]$ denotes Taylor-coefficient extraction at $z=0$\footnote{
For a function $F$ analytic at the origin, $[z^j]F(z)$ denotes
the coefficient of $z^j$ in its Taylor expansion, equivalently
$F^{(j)}(0)/j!$. Thus, $c_n$ is obtained by expanding
$(z/\Phi_{\mathrm C}(z))^n$, selecting the coefficient of $z^{n-1}$,
and dividing by $n$.};
the quotient is defined there by continuity.

\begin{theorem}[Classical-capacity series]
\label{capuni:thm-classical}
For every $0<\gamma<1$,
\begin{align}
C(\mathcal A_\gamma)
&=h_2\!\left(\frac a2\right)
 -h_2\!\left(\frac{1-d}{2}\right)\notag\\
&\quad+\frac1{2\ln2}\sum_{n=1}^{\infty}
 \frac{c_n\delta_{\mathrm C}^{n+1}}{n+1}.
\label{capuni:classical-series}
\end{align}
The series converges absolutely. The endpoint values are
$C(\mathcal A_0)=1$ and $C(\mathcal A_1)=0$.
\end{theorem}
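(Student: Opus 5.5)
The plan is to start from the single-letter characterization of Theorem~5.5 of Ref.~\cite{TangZhuBaiWang2026}. I expect it to reduce $C(\mathcal A_\gamma)$ to a scalar optimization over a symmetric pure-state ensemble.

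\textbf{Step 1 (reduction to a scalar objective).} For the states $\sqrt{1-p}\,|0\rangle\pm\sqrt p\,|1\rangle$, each output has excited population $ap$, coherence $\sqrt{a\,p(1-p)}$, and determinant $ap(1-ap)-ap(1-p)=bp^2$. The Holevo quantity is therefore
\begin{equation*}
\chi(p)=h_2(ap)-h_2\!\left(\frac{1-\sqrt{1-4bp^2}}{2}\right).
\end{equation*}
I will take $C(\mathcal A_\gamma)=\max_p\chi(p)$ as the content of the cited theorem. Substituting $p=(1-z)/2$ gives $\sqrt{1-4bp^2}=D(z)$. At $z=0$ the objective equals $h_2(a/2)-h_2\!\left(\tfrac{1-d}{2}\right)$, which is exactly the leading term in Eq.~\eqref{capuni:classical-series}. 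So the sum must be the gain obtained by moving from $p=1/2$ to the optimum.

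\textbf{Step 2 (stationarity as an inversion problem).} Let $F(z)=(\ln2)\,\chi((1-z)/2)$. Differentiating the two entropies, I expect to verify by direct computation that
\begin{equation*}
F'(z)=\tfrac12\big[\delta_{\mathrm C}-\Phi_{\mathrm C}(z)\big],
\end{equation*}
with $\Phi_{\mathrm C}(0)=0$, which holds by inspection of Eq.~\eqref{capuni:classical-explicit-kernel}. The additive constant $\delta_{\mathrm C}$ should collect the $z$-independent parts of the derivative. Since $\Phi_{\mathrm C}'(0)\neq0$, Lagrange inversion of $w=\Phi_{\mathrm C}(z)$ gives $z(w)=\sum_{n\ge1}c_nw^n$, with $c_n$ exactly as in Eq.~\eqref{capuni:classical-explicit-coefficients}. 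The optimum is then $z_*=z(\delta_{\mathrm C})$; concavity of the objective makes this stationary point the maximizer.

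\textbf{Step 3 (value via integration by parts).} We have
\begin{equation*}
F(z_*)-F(0)=\tfrac12\int_0^{z_*}\big[\delta_{\mathrm C}-\Phi_{\mathrm C}(z)\big]dz .
\end{equation*}
Integrating by parts with $w=\Phi_{\mathrm C}(z)$ turns this into $\tfrac12\int_0^{\delta_{\mathrm C}}z(w)\,dw$. Integrating the Lagrange series term by term gives $\tfrac12\sum_n c_n\delta_{\mathrm C}^{n+1}/(n+1)$, and dividing by $\ln2$ yields Eq.~\eqref{capuni:classical-series}. The endpoint values $C(\mathcal A_0)=1$ and $C(\mathcal A_1)=0$ follow by direct evaluation of $\max_p\chi$, or by continuity.

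\textbf{Main obstacle: absolute convergence.} The series must be shown to converge at $w=\delta_{\mathrm C}$. This requires that $\Phi_{\mathrm C}$ be injective and analytic on a complex neighborhood of the segment from $0$ to $z_*$, with the inverse analytic on a disk of radius strictly larger than $|\delta_{\mathrm C}|$. I would first locate the singularities of $\Phi_{\mathrm C}$: the logarithmic point $z=1$, the pole at $z=-(1+\gamma)/a$, and the branch points where $D(z)=0$ or $D(z)=1$. I would then locate the critical points where $\Phi_{\mathrm C}'=0$. A Rouch\'e or argument-principle estimate should then show that the image of a suitable disk around the origin contains the closed disk of radius $|\delta_{\mathrm C}|$. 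If a uniform bound of this kind is hard to obtain, my fallback is to prove monotonicity and real-analyticity of $\Phi_{\mathrm C}$ on the relevant real interval. I would combine this with the Pringsheim-type fact that the radius of convergence is set by the nearest singularity of $z(w)$, and show that this singularity lies beyond $|\delta_{\mathrm C}|$ for each fixed $0<\gamma<1$.
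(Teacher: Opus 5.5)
\textbf{There is a genuine gap: the proposal never establishes convergence of the series at $w=\delta_{\mathrm C}$, and the fallback you propose for it does not work.}

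Your Steps~1--3 reproduce the paper's construction in different notation. Your substitution $p=(1-z)/2$ is the normalization $u=c_{\mathrm{cl}}-r_{\mathrm{cl}}z$ with $c_{\mathrm{cl}}=r_{\mathrm{cl}}=\tfrac12$. Your identity $F'(z)=\tfrac12[\delta_{\mathrm C}-\Phi_{\mathrm C}(z)]$ is correct, because $\Phi_{\mathrm C}(z)=f_{\mathrm{cl}}'(\tfrac{1-z}{2})-f_{\mathrm{cl}}'(\tfrac12)=\tfrac{\kappa_{\mathrm{cl}}}{2}zB_{\mathrm{cl}}(z)$ and $\delta_{\mathrm C}=-f_{\mathrm{cl}}'(\tfrac12)$. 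Your integration by parts is Eq.~\eqref{capuni:integrated-inverse}.

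You rightly flag convergence as the main obstacle, and it is not a side issue. Both $z_*=z(\delta_{\mathrm C})$ and the termwise integration in Step~3 presuppose that the Lagrange series converges at $w=\delta_{\mathrm C}$ to the real inverse. As written, neither the formula nor its absolute convergence is proved.

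Your fallback cannot close this. Pringsheim's theorem needs nonnegative coefficients, but here $c_1=2/\kappa_{\mathrm{cl}}>0$ while $c_2=-(2/\kappa_{\mathrm{cl}})^2m_{\mathrm{cl},1}<0$, since $m_{\mathrm{cl},1}$ is the positive first moment of $\mu_{\mathrm{cl}}$. Monotonicity and real-analyticity of $\Phi_{\mathrm C}$ on the real segment only continue the inverse along that segment. They do not exclude a non-real singularity of the inverse of modulus at most $\delta_{\mathrm C}$, such as a critical value $\Phi_{\mathrm C}(z_0)$ with $\Phi_{\mathrm C}'(z_0)=0$ and $z_0\notin\mathbb R$. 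Such a point would push the radius of convergence below $\delta_{\mathrm C}$. Your primary Rouch\'e route has the right shape, but the lower bound for $|\Phi_{\mathrm C}|$ on a circle that it needs is exactly what is missing.

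Your singularity list is also inaccurate. The points $D(z)=0$ are removable, since $D^{-1}\ln\frac{1+D}{1-D}$ is even in $D$. The point $z=-(1+\gamma)/a$ is a logarithmic branch point, not a pole.

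\textbf{The missing idea is a positivity (moment) representation of the kernel.} Write $a\ln\frac{1+\gamma+az}{(1+\gamma)(1-z)}=az\int_{-a/(1+\gamma)}^{1}\frac{d\sigma}{1-\sigma z}$. Write the derivative of the output entropy as $s(u)=v^2u\int_0^\infty\frac{dy}{1+vu\cosh y}$ with $v=2\sqrt b$, so that the rest of $\Phi_{\mathrm C}$ is $s(\tfrac12)-s(\tfrac{1-z}{2})$. One then gets $\Phi_{\mathrm C}(z)=\tfrac{\kappa_{\mathrm{cl}}}{2}z\int_{-1}^{1}\frac{d\mu(\sigma)}{1-\sigma z}$ for a probability measure $\mu$, as in Eq.~\eqref{capuni:moment-form}. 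This has three consequences:
\begin{itemize}
\item $\Phi_{\mathrm C}$ is analytic in $|z|<1$.
\item On $|z|=\rho<1$ one has $\operatorname{Re}B_{\mathrm{cl}}\ge1/(1+\rho)$, hence $|\Phi_{\mathrm C}(z)|\ge\tfrac{\kappa_{\mathrm{cl}}}{2}\,\rho/(1+\rho)$.
\item Rouch\'e then gives a univalent inverse on $|w|<\kappa_{\mathrm{cl}}/4$ with values in the unit disk, and Cauchy's estimate gives $|c_n|\le(4/\kappa_{\mathrm{cl}})^n$.
\end{itemize}

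The second ingredient is quantitative: you must show $0<\delta_{\mathrm C}<\kappa_{\mathrm{cl}}/4$. The paper obtains this from
$\delta_{\mathrm C}=2\gamma a^2\int_0^1\frac{y^2\,dy}{(1-d^2y^2)(1-\gamma^2y^2)}$ and
$\kappa_{\mathrm{cl}}=\frac{4a}{1+\gamma}+4bJ$, where $J=\int_0^1\frac{y^2\,dy}{1-d^2y^2}$.
Together these give $\tfrac{\kappa_{\mathrm{cl}}}{4}-\delta_{\mathrm C}\ge\frac{a(1-bJ)}{1+\gamma}>0$, using $bJ<1$ because $by^2<1-d^2y^2$ for $0\le y<1$.

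With both ingredients, the $n$th term is bounded by $\delta_{\mathrm C}(4\delta_{\mathrm C}/\kappa_{\mathrm{cl}})^n$, so convergence is geometric. The value $z_*=z(\delta_{\mathrm C})$ is real with $|z_*|<1$, so it is a physical stationary point. Termwise integration is then justified. The same representation also gives the strict concavity you assumed, since $\Phi_{\mathrm C}'>0$ on $(-1,1)$.
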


\section{Entanglement-assisted classical capacity}
\label{capuni:sec-ea}

For entanglement-assisted communication
\cite{BennettShorSmolinThapliyal2002}, retain $a=1-\gamma$ and set
\begin{align}
L&=-\ln a,\qquad x=\max\{1,L-\ln(1+L)\},
\label{capuni:ea-reference}\\
\delta_{\mathrm E}
&=2a\ln x-a\ln\frac{1+\gamma x}{a}\notag\\
&\quad+\gamma\ln\frac{1+ax}{\gamma}.
\label{capuni:ea-explicit-parameter}
\end{align}
Using the same coefficient-extraction convention, define
\begin{align}
\Phi_{\mathrm E}(z)
&=\ln(1+z)+a\ln\!\left(1+\frac{az}{1+\gamma x}\right)
\notag\\
&\quad-\gamma\ln\!\left(1+\frac{\gamma z}{1+ax}\right)
 -2a\ln\!\left(1-\frac zx\right),
\label{capuni:ea-explicit-kernel}\\
e_n&=\frac1n[z^{n-1}]
 \left(\frac{z}{\Phi_{\mathrm E}(z)}\right)^n,
\qquad n\ge1.
\label{capuni:ea-explicit-coefficients}
\end{align}

\begin{theorem}[Entanglement-assisted capacity series]
\label{capuni:thm-ea}
For every $0<\gamma<1$,
\begin{align}
C_E(\mathcal A_\gamma)
&=h_2\!\left(\frac{x}{1+x}\right)
 +h_2\!\left(\frac{ax}{1+x}\right)\notag\\
&\quad-h_2\!\left(\frac{\gamma x}{1+x}\right)\notag\\
&\quad+\frac1{(1+x)\ln2}\sum_{n=1}^{\infty}
 \frac{e_n\delta_{\mathrm E}^{n+1}}{n+1}.
\label{capuni:ea-series}
\end{align}
The series converges absolutely. Moreover,
$C_E(\mathcal A_0)=2$, $C_E(\mathcal A_{1/2})=1$,
and $C_E(\mathcal A_1)=0$.
\end{theorem}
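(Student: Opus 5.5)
The plan is to reduce the entanglement-assisted optimization to a scalar root equation and then apply Lagrange--B\"urmann inversion, in the same way as for Theorem~\ref{thm:closedQ}.

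\textbf{Step 1: scalar objective.} By phase covariance and concavity of the mutual information in the input, it suffices to take diagonal inputs $\rho_u$. The relevant entropies are then those of the input ($u$), the output ($au$) and the environment ($\gamma u$), so
\begin{equation*}
C_E(\mathcal A_\gamma)=\max_{0\le u\le1}F(u),\qquad F(u)=h_2(u)+h_2(au)-h_2(\gamma u).
\end{equation*}
This $F$ is strictly concave, so there is a unique interior maximizer.

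\textbf{Step 2: stationarity condition.} Substitute $u=s/(1+s)$ with $s>0$. The identities $1-au=(1+\gamma s)/(1+s)$ and $1-\gamma u=(1+as)/(1+s)$ turn $(\ln 2)\,F'(u)=0$ into
\begin{equation*}
G(s):=-\ln s+a\ln\frac{1+\gamma s}{as}-\gamma\ln\frac{1+as}{\gamma s}=0 .
\end{equation*}
At the reference point $s=x$, the $\ln x$ coefficients combine to $-1-a+\gamma=-2a$, which gives $G(x)=-\delta_{\mathrm E}$.

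\textbf{Step 3: recentring at $x$.} I would choose a local coordinate $z$ centred at $s=x$ so that $G(x)-G(s(z))$ coincides with $\Phi_{\mathrm E}(z)$. The individual logarithms in $\Phi_{\mathrm E}$ are exactly the increments of the three terms of $G$. Then:
\begin{itemize}
\item the stationarity condition becomes $\Phi_{\mathrm E}(z)=\delta_{\mathrm E}$, with $\Phi_{\mathrm E}(0)=0$ and $\Phi_{\mathrm E}'(0)\neq0$;
\item Lagrange inversion gives the optimal coordinate $z_*=\sum_n e_n\delta_{\mathrm E}^n$, with $e_n$ as in Eq.~\eqref{capuni:ea-explicit-coefficients}.
\end{itemize}

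\textbf{Step 4: the optimal value.} To obtain the value rather than the root, I embed the problem in the one-parameter family obtained by replacing $\delta_{\mathrm E}$ with a variable $t\in[0,\delta_{\mathrm E}]$. At $t=0$ the optimum is at $s=x$, i.e.\ $u=x/(1+x)$, which produces the three $h_2$ terms in Eq.~\eqref{capuni:ea-series}. By an envelope-type computation, the derivative of the optimal value with respect to $t$ equals $z_*(t)/((1+x)\ln2)$; the Jacobian of $u=s/(1+s)$ at $s=x$ supplies the factor $1/(1+x)$. Integrating $z_*(t)=\sum e_n t^n$ term by term from $0$ to $\delta_{\mathrm E}$ yields $\sum e_n\delta_{\mathrm E}^{n+1}/(n+1)$ with the stated prefactor. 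The exact correspondence between $z$ and $t$ must be fixed so that this derivative identity holds exactly, not merely to first order. I would verify it by differentiating the identity $\Phi_{\mathrm E}(z_*(t))=t$.

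\textbf{Step 5: absolute convergence (main obstacle).} One must show that $|\delta_{\mathrm E}|$ lies strictly inside the radius of convergence of the inverse series of $\Phi_{\mathrm E}$ for all $0<\gamma<1$. The plan is to find a radius $r<x$, so that the disk avoids the singularity at $z=x$ and the branch points of the other logarithms, with two properties:
\begin{itemize}
\item $\Phi_{\mathrm E}$ is univalent on $|z|\le r$;
\item $\min_{|z|=r}|\Phi_{\mathrm E}(z)|>|\delta_{\mathrm E}|$.
\end{itemize}
A Cauchy-estimate (Rouché) argument then gives geometric decay of $e_n\delta_{\mathrm E}^n$. The choice $x=\max\{1,L-\ln(1+L)\}$ appears designed to balance the dominant $-2a\ln(1-z/x)$ term against $\delta_{\mathrm E}$, so that $|\delta_{\mathrm E}|$ stays uniformly small relative to this radius. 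I would try first to bound $|\delta_{\mathrm E}|$ and $\min_{|z|=r}|\Phi_{\mathrm E}|$ separately, using elementary logarithm inequalities in the two regimes $x=1$ and $x>1$.

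\textbf{Step 6: endpoints.} At $\gamma=0$, $C_E=\max_u 2h_2(u)=2$. At $\gamma=1$, $F\equiv h_2(u)-h_2(u)=0$. At $\gamma=1/2$, $L=\ln2$ and $x=1$, so $\delta_{\mathrm E}=0$ and the series vanishes; the leading terms then give $h_2(1/2)+h_2(1/4)-h_2(1/4)=1$.
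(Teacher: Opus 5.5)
Your Steps 1--4 and 6 follow the paper's route: restriction to $\rho_u$, recentring the stationarity condition at $c_{\mathrm{ea}}=x/(1+x)$, Lagrange inversion of $\Phi_{\mathrm E}$, and integration of the inverse. Your envelope argument is equivalent to the paper's integration by parts. \textbf{The genuine gap is Step 5.} It is needed for more than the convergence claim: the termwise integration in Step 4 also requires $\sum_n e_nt^n$ to converge to the \emph{real} root for every $t$ between $0$ and $\delta_{\mathrm E}$.

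Univalence on a disk plus a boundary lower bound on $|\Phi_{\mathrm E}|$ is an admissible framework. However, you give no way to obtain either property uniformly in $\gamma$. A direct modulus estimate on a complex circle, for a sum of logarithms containing the negatively weighted term $-\gamma\ln[1+\gamma z/(1+ax)]$, is exactly the nonroutine part.

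The missing idea is a positive moment representation. With $w=z/(1+x)$, write each increment as $w^{-1}\ln(1+\beta w)=\int_0^\beta d\xi/(1+\xi w)$. The negative environment measure is then dominated by the input-entropy measure, because $\gamma\le1$ and $\chi=\gamma/(1-\gamma c_{\mathrm{ea}})\le\alpha=1+x$. Every support rescales into $[-1,1]$; the one from $-2a\ln(1-z/x)$ does so precisely because $x\ge1$. Hence $\Phi_{\mathrm E}(z)=\kappa_{\mathrm{ea}}r_{\mathrm{ea}}\,zB(z)$ with $B(z)=\int_{-1}^{1}d\mu(s)/(1-sz)$ for a probability measure $\mu$, so $\operatorname{Re}B\ge1/(1+\rho)$ on $|z|=\rho<1$. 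This does two things:
\begin{itemize}
\item it replaces univalence, since $B$ has no zeros in the unit disk;
\item it gives $|\Phi_{\mathrm E}(z)|\ge\kappa_{\mathrm{ea}}r_{\mathrm{ea}}\rho/(1+\rho)$ on $|z|=\rho$.
\end{itemize}
Rouch\'e then yields a unique zero of $\Phi_{\mathrm E}-\tau$ in the unit disk, real for real $\tau$, whenever $|\tau|<\kappa_{\mathrm{ea}}r_{\mathrm{ea}}/2$, together with $|e_n|\le[2/(\kappa_{\mathrm{ea}}r_{\mathrm{ea}})]^n$. What remains is the explicit inequality $|\delta_{\mathrm E}|<\kappa_{\mathrm{ea}}r_{\mathrm{ea}}/2$, i.e.\ $|t_{\mathrm{ea}}|<1/2$, which the paper checks separately for $x=1$ and $x>1$. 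That inequality is what the choice of $x$ is designed for: it places $c_{\mathrm{ea}}$ close enough to the maximizer. It is not a balancing of the $\ln(1-z/x)$ term.

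There are also three smaller corrections.
\begin{itemize}
\item \emph{Sign.} The increments match with $\Phi_{\mathrm E}(z)=G(s(z))-G(x)$, where $s(z)=(x-z)/(1+z)$. With your sign, the stationarity condition would read $\Phi_{\mathrm E}=-\delta_{\mathrm E}$.
\item \emph{Prefactor.} The factor $1/(1+x)$ is $r_{\mathrm{ea}}$, the slope of the exactly affine map $u=(x-z)/(1+x)$. It is not the Jacobian of $s\mapsto s/(1+s)$ at $s=x$, which is $1/(1+x)^2$. Affinity is what makes your envelope identity exact. Define $M(t)=\max_u\{f_{\mathrm{ea}}(u)-[f_{\mathrm{ea}}'(c_{\mathrm{ea}})+t](u-c_{\mathrm{ea}})\}$. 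Then $M(0)=f_{\mathrm{ea}}(c_{\mathrm{ea}})$, $M(\delta_{\mathrm E})=\max_uf_{\mathrm{ea}}(u)$, and $M'(t)=c_{\mathrm{ea}}-u_*(t)=r_{\mathrm{ea}}z_*(t)$.
\item \emph{Sign of $\delta_{\mathrm E}$.} $\delta_{\mathrm E}$ can be negative: for $\gamma=0.7$ one has $x=1$ and $\delta_{\mathrm E}\approx-0.087$. So the integration runs between $0$ and $\delta_{\mathrm E}$, not over $[0,\delta_{\mathrm E}]$.
\end{itemize}
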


Note that the entanglement-assisted quantum capacity satisfies
$Q_E(\mathcal A_\gamma)=\tfrac12 C_E(\mathcal A_\gamma)$
with unlimited preshared entanglement
\cite{BennettShorSmolinThapliyal2002}.
Thus, Theorem~\ref{capuni:thm-ea} also provides an exact
series for $Q_E$.

Appendix~\ref{capuni:app} gives the common normalized formalism,
proofs, optimal populations, finite coefficient formulas,
and truncation bounds.

\section{Conclusion}
\label{sec:conclusion}

We have evaluated the remaining scalar optimization in the quantum
capacity of the qubit amplitude damping channel.
Theorem~\ref{thm:closedQ} gives the capacity in the interior of the
degradable regime as the explicit, absolutely convergent series
in Eq.~\eqref{eq:Qseries}. The proof also establishes the root
representation, the unique optimal input population, the endpoint
limits, and the leading behavior at the
degradable-antidegradable threshold.

The same solution evaluates the balanced-squashing upper bound for
two-way assisted quantum communication, entanglement distribution,
and secret-key generation. The identity
$U(\gamma)=Q(\mathcal A_{\gamma/2})$ relates this bound to the
ordinary quantum capacity at half the physical damping probability.
Together with the optimized RCI lower bound, it gives the explicit
bounds in Theorem~\ref{thm:twowaybound} across the full damping range.
The upper bound concerns a fixed squashing construction;
determining its global optimality requires a separate analysis.

We have also extended the inversion method to the unassisted
classical and entanglement-assisted classical capacities.
Theorems~\ref{capuni:thm-classical} and~\ref{capuni:thm-ea} evaluate
both capacities through the explicit series in
Eqs.~\eqref{capuni:classical-series} and~\eqref{capuni:ea-series},
using a common coefficient construction.
Their positive moment kernels yield absolute convergence for every
$0<\gamma<1$, explicit maximizing populations, and common truncation
bounds. Only the reference population and the coefficient data change
between the two communication tasks.

Together, these results provide explicit analytical benchmarks for
three communication capacities of a basic dissipative channel and
for the comparison of achievable rates and converse bounds in
assisted quantum communication. The same approach may also be useful
in identifying other channel models whose remaining scalar capacity
optimizations admit convergent series representations.

\begin{acknowledgments}

We acknowledge support from UKRI via the Integrated Quantum Networks
Research Hub (IQN, EP/Z533208/1).

\end{acknowledgments}

\section*{AI statement}

OpenAI's ChatGPT models (GPT-5.6 Sol and GPT-6 Astra) were used during the preparation of this manuscript to generate candidate mathematical derivations and proofs, assist with their development, and improve the presentation of the manuscript. All mathematical results and proofs were independently examined and verified by the author, who assumes full responsibility for the accuracy, originality, and scientific content of the work.

\appendix
\section{Proof of Theorem~\ref{thm:closedQ}}
\label{app:proof}

Fix $0<\gamma<1/2$. The objective in Eq.~\eqref{eq:Qmax} is
continuous on $[0,1]$, vanishes at both endpoints, and is strictly
concave by Eq.~\eqref{eq:strictconcavity}. It is therefore positive
in the interior and has a unique interior maximizer. In particular,
this maximizer satisfies the stationarity condition used below.

Set
\begin{equation}
d:=1-2\gamma,~~ k:=\frac{\gamma}{d}.
\label{eq:appendixdk}
\end{equation}
Then
\begin{equation}
\gamma=kd,~~ 1-\gamma=(k+1)d,
~~ d=\frac1{2k+1}.
\label{eq:appendixrelations}
\end{equation}
The rescaled population
\begin{equation}
t:=du,~~ 0\leq t\leq\frac1{2k+1},
\label{eq:def_t}
\end{equation}
puts the objective in the form
\begin{equation}
F(t):=H[(k+1)t]-H(kt),
\label{eq:Ft}
\end{equation}
where $H$ is defined in Eq.~\eqref{eq:naturalentropy}.
Thus $F(t)=\ln2\,I_c(u,\gamma)$.

Let $t_*$ denote the unique maximizing value. Differentiation gives
\begin{equation}
(k+1)\ln\frac{1-(k+1)t_*}{(k+1)t_*}
=k\ln\frac{1-kt_*}{kt_*}.
\label{eq:stationaryt}
\end{equation}
Using
\begin{equation}
H(x)=x\ln\frac{1-x}{x}-\ln(1-x),
\label{eq:Hidentity}
\end{equation}
the terms proportional to $t_*$ cancel by
Eq.~\eqref{eq:stationaryt}, leaving
\begin{equation}
F(t_*)=\ln\frac{1-kt_*}{1-(k+1)t_*}.
\label{eq:Fstar}
\end{equation}
Both the numerator and the denominator are positive in the interior
of the physical domain. Define
\begin{equation}
z:=\frac{1-kt_*}{1-(k+1)t_*}>1.
\label{eq:defzappendix}
\end{equation}
Solving for $t_*$ yields
\begin{equation}
t_* =\frac{z-1}{1+(k+1)(z-1)}.
\label{eq:tzappendix}
\end{equation}
The two ratios in the stationarity condition become
\begin{align}
\frac{1-(k+1)t_*}{(k+1)t_*}
&=\frac1{(k+1)(z-1)},
\nonumber\\
\frac{1-kt_*}{kt_*}&=\frac{z}{k(z-1)}.
\end{align}
Substitution into Eq.~\eqref{eq:stationaryt} gives
\begin{equation}
(z-1)z^k=\frac{k^k}{(k+1)^{k+1}}=c_k.
\label{eq:zimplicitappendix}
\end{equation}

For $z>1$, the function $g_k(z):=(z-1)z^k$ has derivative
\begin{equation}
g_k'(z)=z^{k-1}[(k+1)z-k]>0.
\end{equation}
It increases continuously from zero to infinity, so
Eq.~\eqref{eq:zimplicitappendix} has exactly one root $z_k>1$.
Equations~\eqref{eq:Fstar} and \eqref{eq:defzappendix} therefore
establish
\begin{equation}
Q(\mathcal{A}_{\gamma})=\frac{F(t_*)}{\ln2}=\log_2z_k.
\label{eq:Qrootappendix}
\end{equation}
Since $u_*=(2k+1)t_*$, Eq.~\eqref{eq:tzappendix} also gives
Eq.~\eqref{eq:optimalpopulation}.

The population obtained from the positive root is indeed physical.
Writing $w_k:=z_k-1$, Eq.~\eqref{eq:zimplicitappendix} implies
$0<w_k<c_k$. Moreover,
\begin{equation}
kc_k=\left(\frac{k}{k+1}\right)^{k+1}<1.
\end{equation}
Thus $kw_k<1$, which is precisely the condition ensuring
$(2k+1)w_k/[1+(k+1)w_k]<1$; positivity is immediate.

To obtain the series, rewrite the root equation as
\begin{equation}
w_k=c_k(1+w_k)^{-k}
\label{eq:wimplicitappendix}
\end{equation}
and consider the local analytic solution of
\begin{equation}
w=x(1+w)^{-k},~~w(0)=0.
\label{eq:wgeneral}
\end{equation}
The branch of $(1+w)^{-k}$ used here is analytic near $w=0$ and
positive for real $w>-1$. For an analytic function $G$, the
Lagrange-B\"urmann formula gives, for sufficiently small $|x|$,
\cite{Gessel2016}
\begin{equation}
G[w(x)]=G(0)+\sum_{n=1}^{\infty}
\frac{x^n}{n}\,[s^{n-1}]
\bigl(G'(s)(1+s)^{-kn}\bigr),
\label{eq:LBgeneral}
\end{equation}
where $[s^{n-1}]$ denotes coefficient extraction. Taking
$G(s)=\ln(1+s)$ gives
\begin{align}
\ln[1+w(x)]
&=\sum_{n=1}^{\infty}\frac{x^n}{n}
[s^{n-1}](1+s)^{-kn-1}
\nonumber\\
&=\sum_{n=1}^{\infty}\frac{(-1)^{n-1}}{n}
\binom{(k+1)n-1}{n-1}x^n.
\label{eq:LBseriesappendix}
\end{align}

The evaluation at $x=c_k$ requires a boundary argument. For fixed
$k>0$, the absolute value of the $n$th term at this point is
\begin{align}
a_n
:=\frac1n\binom{(k+1)n-1}{n-1}c_k^n
=\frac{\Gamma((k+1)n)}
{\Gamma(n+1)\Gamma(kn+1)}c_k^n.
\end{align}
Stirling's formula gives
\begin{equation}
a_n\sim\frac1{\sqrt{2\pi k(k+1)}}n^{-3/2},
\qquad n\to\infty.
\label{eq:seriesasymptotic}
\end{equation}
Consequently, the power series on the right-hand side of
Eq.~\eqref{eq:LBseriesappendix} has radius of convergence $c_k$
and converges absolutely at $x=c_k$.

For real $x\geq0$, the map $x=w(1+w)^k$ has a unique nonnegative
inverse, because
\begin{equation}
\frac{d}{dw}\bigl[w(1+w)^k\bigr]
=(1+w)^{k-1}[1+(k+1)w]>0
\end{equation}
for $w\geq0$. Its inverse is analytic in a neighborhood of every
point on the nonnegative real axis. The identity in
Eq.~\eqref{eq:LBseriesappendix}, established locally near $x=0$,
therefore extends along the positive axis throughout $0\leq x<c_k$.
Since the series is absolutely
convergent at $c_k$, it is uniformly convergent on $[0,c_k]$ by
comparison with $\sum_n a_n$. Taking $x\to c_k^-$ in
Eq.~\eqref{eq:LBseriesappendix} and using continuity of the
positive inverse gives
\begin{equation}
\ln(1+w_k)=\sum_{n=1}^{\infty}\frac{(-1)^{n-1}}{n}
\binom{(k+1)n-1}{n-1}c_k^n.
\end{equation}
Division by $\ln2$ proves Eq.~\eqref{eq:Qseries}, including its
absolute convergence, and completes the proof of
Theorem~\ref{thm:closedQ}.

We finally establish the endpoint statements. As $\gamma\to0^+$,
one has $k\to0^+$ and $c_k\to1$. The positive solution of
$w_k(1+w_k)^k=c_k$ satisfies $0<w_k<c_k<1$. Hence
$(1+w_k)^k\to1$ and $w_k\to1$, giving
$Q(\mathcal{A}_{\gamma})\to1$ and $u_*(\gamma)\to1/2$.
The absolute-convergence assertion in the theorem is restricted to
$k>0$; the formal $k=0$ limit of the series is the conditionally
convergent alternating harmonic series.

As $\gamma\to1/2^-$, one has $k\to\infty$. Define $y_k:=kw_k$.
The root equation becomes
\begin{equation}
y_k\left(1+\frac{y_k}{k}\right)^k
=\left(\frac{k}{k+1}\right)^{k+1}.
\label{eq:yimplicit}
\end{equation}
Since $0<y_k<1$, the family $y_k$ is bounded. Any accumulation
point $y$ as $k\to\infty$ satisfies $ye^y=e^{-1}$ by
Eq.~\eqref{eq:yimplicit}. The unique nonnegative solution is
$\omega=W_0(e^{-1})$, so $y_k\to\omega$. Therefore
\begin{align}
\frac{Q(\mathcal{A}_{\gamma})}{1-2\gamma}
&=\frac{(2k+1)\ln(1+y_k/k)}{\ln2}
\longrightarrow\frac{2\omega}{\ln2},
\nonumber\\
u_*(\gamma)
&=\frac{(2+1/k)y_k}{1+(1+1/k)y_k}
\longrightarrow\frac{2\omega}{1+\omega}.
\end{align}
These limits establish Eqs.~\eqref{eq:thresholdQ} and
\eqref{eq:thresholdpopulation}. The exact endpoint values follow
also directly from Eq.~\eqref{eq:Qmax}.

\section{Optimization-free reverse-coherent-information lower bound}
\label{app:RCI}

The reverse coherent information (RCI) is
$I_{\mathrm R}(\rho,\mathcal A_\gamma)
=S(\rho)-S[\widetilde{\mathcal A}_\gamma(\rho)]$.
For the Stinespring isometry $V_\gamma:A\to BE$ in
Eq.~\eqref{eq:stinespring}, this equals
$S(B|E)_{V_\gamma\rho V_\gamma^\dagger}$ and is therefore concave
in $\rho$. Phase covariance then allows the optimization to be
restricted to the diagonal inputs $\rho_u$ of
Eq.~\eqref{eq:diagonal}.
The resulting rate is achievable by entanglement distillation with
backward classical communication~\cite{GarciaPatron2009} and hence
lower bounds the two-way assisted quantum capacity:
\begin{equation}
Q_2(\mathcal A_\gamma)
\geq
R_{\mathrm{RCI}}(\gamma)
:=
\max_{0\leq u\leq1}
\left\{h_2(u)-h_2(\gamma u)\right\}.
\label{eq:RCImax}
\end{equation}
This maximization can be eliminated analytically.

\begin{proposition}
\label{prop:RCIclosed}
For $0<\gamma<1$, define
\begin{equation}
c_\gamma=(1-\gamma)\gamma^{\gamma/(1-\gamma)},
\label{eq:RCIcgamma}
\end{equation}
and let $z_\gamma\in(0,1)$ be the unique solution of
\begin{equation}
z_\gamma
=
c_\gamma(1-z_\gamma)^{1/(1-\gamma)}.
\label{eq:RCIz}
\end{equation}
Then
\begin{equation}
R_{\mathrm{RCI}}(\gamma)
=
-\log_2(1-z_\gamma),
\label{eq:RCIclosed}
\end{equation}
and the maximizing excited-state population is
\begin{equation}
u_\gamma^\star
=
\frac{z_\gamma}{1-\gamma+\gamma z_\gamma}.
\label{eq:RCIustar}
\end{equation}
The endpoint values are
\begin{equation}
R_{\mathrm{RCI}}(0)=1,
\qquad
R_{\mathrm{RCI}}(1)=0.
\label{eq:RCIendpoints}
\end{equation}
\end{proposition}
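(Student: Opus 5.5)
We follow the strategy of Appendix~\ref{app:proof}. Work in nats with $G(u):=H(u)-H(\gamma u)$, so that $R_{\mathrm{RCI}}(\gamma)=\max_u G(u)/\ln2$. The second derivative is
\begin{equation*}
G''(u)=-\frac{1}{u(1-u)}+\frac{\gamma}{u(1-\gamma u)}
=-\frac{1-\gamma}{u(1-u)(1-\gamma u)}<0
\end{equation*}
for $0<\gamma<1$ and $0<u<1$. Hence $G$ is strictly concave. Near $u=1$ we have $G'(u)\to-\infty$ (from the $\ln[(1-u)/u]$ term), and near $u=0$ we have $G'(u)\approx(1-\gamma)\ln(1/u)\to+\infty$. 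So there is a unique interior maximizer $u^\star$, characterized by stationarity:
\begin{equation*}
\ln\frac{1-u^\star}{u^\star}=\gamma\ln\frac{1-\gamma u^\star}{\gamma u^\star}.
\end{equation*}

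Next I eliminate the linear terms using Eq.~\eqref{eq:Hidentity}. Writing $G=u\ln\frac{1-u}{u}-\ln(1-u)-\gamma u\ln\frac{1-\gamma u}{\gamma u}+\ln(1-\gamma u)$, stationarity cancels the first and third terms at $u^\star$. This leaves
\begin{equation*}
G(u^\star)=\ln\frac{1-\gamma u^\star}{1-u^\star}.
\end{equation*}
Define $z$ through $1-z:=(1-u^\star)/(1-\gamma u^\star)\in(0,1)$, so that $G(u^\star)=-\ln(1-z)$. Solving for $u^\star$ gives $u^\star=z/(1-\gamma+\gamma z)$, which is Eq.~\eqref{eq:RCIustar}.

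To derive the root equation, substitute into stationarity. One computes $1-u^\star=(1-\gamma)(1-z)/(1-\gamma+\gamma z)$ and $1-\gamma u^\star=(1-\gamma)/(1-\gamma+\gamma z)$. The two ratios are therefore
\begin{equation*}
\frac{1-u^\star}{u^\star}=\frac{(1-\gamma)(1-z)}{z},\qquad \frac{1-\gamma u^\star}{\gamma u^\star}=\frac{1-\gamma}{\gamma z}.
\end{equation*}
Stationarity then reads $(1-\gamma)(1-z)/z=[(1-\gamma)/(\gamma z)]^{\gamma}$, i.e.
\begin{equation*}
z^{1-\gamma}=(1-\gamma)^{1-\gamma}\gamma^{\gamma}(1-z).
\end{equation*}
Raising both sides to the power $1/(1-\gamma)$ gives exactly Eq.~\eqref{eq:RCIz} with $c_\gamma$ as in Eq.~\eqref{eq:RCIcgamma}.

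For uniqueness, note that $z\mapsto z-c_\gamma(1-z)^{1/(1-\gamma)}$ is strictly increasing on $[0,1]$, equals $-c_\gamma<0$ at $z=0$, and equals $1>0$ at $z=1$. So $z_\gamma\in(0,1)$ is the unique root, and it must coincide with the $z$ produced by the maximizer. The same monotonicity argument shows that the $u^\star$ so obtained lies in $(0,1)$. This proves Eq.~\eqref{eq:RCIclosed}.

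The endpoint values in Eq.~\eqref{eq:RCIendpoints} follow directly from Eq.~\eqref{eq:RCImax}. At $\gamma=0$ the objective is $h_2(u)$, with maximum $1$. At $\gamma=1$ it vanishes identically.

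The step I expect to require the most care is the link to the series claimed in Theorem~\ref{thm:twowaybound}: the RCI rate should equal $Q(\mathcal{A}_{\gamma'})$ with $\gamma'=\gamma/(1+\gamma)$. Here $k'=\gamma'/(1-2\gamma')=\gamma/(1-\gamma)=2\kappa_\gamma$. My plan is to set $1-z_\gamma=1/z'$ with $z'>1$ and check that $(z'-1)z'^{k'}=c_{k'}$. From Eq.~\eqref{eq:RCIz}, $z_\gamma=(z'-1)/z'$ and $(1-z_\gamma)^{1/(1-\gamma)}=z'^{-(k'+1)}$. This gives
\begin{equation*}
(z'-1)z'^{k'}=c_\gamma=(1-\gamma)\gamma^{\gamma/(1-\gamma)}.
\end{equation*}
With $k'=\gamma/(1-\gamma)$ one has $k'+1=1/(1-\gamma)$, so $c_{k'}=k'^{k'}(k'+1)^{-k'-1}=\gamma^{k'}(1-\gamma)$, which matches $c_\gamma$. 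Hence $R_{\mathrm{RCI}}(\gamma)=\log_2 z'$, which is the root representation in Eq.~\eqref{eq:rootcapacitypreview}, and Theorem~\ref{thm:closedQ} supplies the series.
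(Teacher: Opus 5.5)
Your proof is correct and follows the paper's argument essentially step for step. The steps match: strict concavity with infinite boundary slopes gives a unique stationary point, the substitution $1-z=(1-u)/(1-\gamma u)$ turns stationarity into Eq.~\eqref{eq:RCIz}, and cancelling the terms proportional to $u$ (the paper's identity $f_\gamma=\log_2\frac{1-\gamma u}{1-u}+uf_\gamma'$) yields $-\log_2(1-z_\gamma)$. Your extra identification with $Q(\mathcal{A}_{\gamma/(1+\gamma)})$ via $z'=1/(1-z_\gamma)$ in the root equation is a somewhat more direct route than the paper's, which rederives the RCI series by Lagrange--B\"urmann inversion and then matches it coefficientwise with Eq.~\eqref{eq:Qseries}.
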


\begin{proof}
Set $f_\gamma(u)=h_2(u)-h_2(\gamma u)$. For $0<\gamma<1$,
\begin{equation}
f_\gamma''(u)
=
-\frac{1-\gamma}
{\ln 2\,u(1-u)(1-\gamma u)}
<0,
\qquad 0<u<1.
\label{eq:RCIconcavity}
\end{equation}
Together with $f_\gamma'(0^+)=+\infty$ and
$f_\gamma'(1^-)=-\infty$, this establishes a unique interior
maximizer, satisfying
\begin{equation}
\ln\frac{1-u}{u}
=
\gamma\ln\frac{1-\gamma u}{\gamma u}.
\label{eq:RCIstationary}
\end{equation}
Introduce
\begin{equation}
z=\frac{(1-\gamma)u}{1-\gamma u},
\qquad
u=\frac{z}{1-\gamma+\gamma z}.
\label{eq:RCIzdef}
\end{equation}
Since
\begin{align}
1-u&=\frac{(1-\gamma)(1-z)}{1-\gamma+\gamma z},
\nonumber\\
1-\gamma u&=\frac{1-\gamma}{1-\gamma+\gamma z},
\end{align}
the stationarity condition becomes
\begin{equation}
z^{1-\gamma}
=
(1-\gamma)^{1-\gamma}\gamma^\gamma(1-z),
\end{equation}
which is equivalent to Eq.~(\ref{eq:RCIz}). Its left-hand side
$z$ increases strictly from zero to one, whereas its right-hand side
$c_\gamma(1-z)^{1/(1-\gamma)}$ decreases strictly from
$c_\gamma$ to zero. Hence $z_\gamma$ exists and is unique.

The identity
\begin{equation}
f_\gamma(u)
=
\log_2\frac{1-\gamma u}{1-u}
+
u f_\gamma'(u)
\end{equation}
and the relation $1-z=(1-u)/(1-\gamma u)$ give
$f_\gamma(u_\gamma^\star)=-\log_2(1-z_\gamma)$.
Finally, $f_0(u)=h_2(u)$ and $f_1(u)=0$ yield the endpoint values.
\end{proof}

The rate also admits the explicit series representation
\begin{equation}
R_{\mathrm{RCI}}(\gamma)
=
\frac{1}{\ln2}
\sum_{n=1}^{\infty}
\frac{(-1)^{n-1}}{n}
\binom{n/(1-\gamma)-1}{n-1}c_\gamma^n,
\label{eq:RCIseries}
\end{equation}
which is absolutely convergent for every $0<\gamma<1$.

To justify Eq.~(\ref{eq:RCIseries}), fix
$\alpha=(1-\gamma)^{-1}>1$ and consider the branch $z(t)$ of
$z=t(1-z)^\alpha$ analytic at $t=0$, with $z(0)=0$.
Lagrange-B\"urmann inversion gives, initially near the origin,
\begin{equation}
-\ln[1-z(t)]
=
\sum_{n=1}^{\infty}
\frac{(-1)^{n-1}}{n}
\binom{\alpha n-1}{n-1}t^n.
\label{eq:RCILagrange}
\end{equation}
The identity
\begin{equation}
\binom{\alpha n-1}{n-1}
=\frac{\Gamma(\alpha n)}
{\Gamma(n)\Gamma((\alpha-1)n+1)}
\end{equation}
and Stirling's formula give the convergence radius
\begin{equation}
r_\alpha
=
\frac{(\alpha-1)^{\alpha-1}}{\alpha^\alpha}
=
c_\gamma,
\end{equation}
and, at the physical evaluation point $t=c_\gamma$,
\begin{equation}
\frac{c_\gamma^n}{n}
\binom{\alpha n-1}{n-1}
\sim
\frac{n^{-3/2}}{\sqrt{2\pi\alpha(\alpha-1)}}.
\label{eq:RCIasymptotic}
\end{equation}
Thus the series converges absolutely at this boundary point.
The map $t=z/(1-z)^\alpha$ has strictly positive derivative for
$0\leq z<1$, so its nonnegative inverse is analytic and the local
identity in Eq.~(\ref{eq:RCILagrange}) extends throughout
$0\leq t<c_\gamma$. Continuity, with
$z(c_\gamma)=z_\gamma$, and Abel's theorem extend this identity
to $t=c_\gamma$. Division by $\ln2$ proves
Eq.~(\ref{eq:RCIseries}).

To identify this rate with the lower bound in
Theorem~\ref{thm:twowaybound}, substitute the effective damping
probability $\gamma/(1+\gamma)$ into Eq.~\eqref{eq:kckQ}.
For $0<\gamma<1$, this gives
$k=\gamma/(1-\gamma)=2\kappa_\gamma$ and $c_k=c_\gamma$.
The series in Eqs.~\eqref{eq:Qseries} and \eqref{eq:RCIseries}
then coincide, proving
\begin{equation}
R_{\mathrm{RCI}}(\gamma)
=
Q(\mathcal A_{\gamma/(1+\gamma)}).
\label{eq:RCIcapacityidentity}
\end{equation}
The identity extends to $\gamma=0,1$ by the endpoint values in
Eqs.~\eqref{eq:endpoints} and \eqref{eq:RCIendpoints}.
Combining Eq.~\eqref{eq:RCImax} with the capacity hierarchy in
Eq.~\eqref{eq:capacityhierarchy} proves the lower bound in
Eq.~\eqref{eq:twowayQbound}.

As a check, when $\gamma=1/2$, Eq.~(\ref{eq:RCIz}) reduces to
\begin{equation}
z_{1/2}=\frac14(1-z_{1/2})^2,
\end{equation}
whose unique solution in $(0,1)$ is
$z_{1/2}=3-2\sqrt2$. Consequently,
\begin{align}
u_{1/2}^\star&=1-\frac1{\sqrt2},
\nonumber\\
R_{\mathrm{RCI}}\!\left(\frac12\right)
&=-\log_2\!\left[2(\sqrt2-1)\right].
\end{align}

\section{Compact capacity formulas and convergence}
\label{capuni:app}

\subsection{Normalized formulas and correspondence}
\label{capuni:app-compact}

Retain $a=1-\gamma$, $b=\gamma a$, and $H=(\ln2)h_2$. Define
\begin{align}
f_{\mathrm{cl}}(u)
&=H(au)-H\!\left(\frac{1-\sqrt{1-4bu^2}}2\right),
\label{capuni:classical-f}\\
f_{\mathrm{ea}}(u)&=H(u)+H(au)-H(\gamma u).
\label{capuni:ea-objective}
\end{align}
The capacity characterizations cited in Secs.~\ref{capuni:sec-classical}
and~\ref{capuni:sec-ea} give
$\mathcal C_\nu=(\ln2)^{-1}\max_{0\le u\le1}f_\nu(u)$,
where $\mathcal C_{\mathrm{cl}}=C(\mathcal A_\gamma)$ and
$\mathcal C_{\mathrm{ea}}=C_E(\mathcal A_\gamma)$.
For the EA case, phase covariance and concavity of the channel
mutual information permit restriction to the diagonal inputs $\rho_u$.
For $x$ in Eq.~\eqref{capuni:ea-reference}, choose
\begin{align}
c_{\mathrm{cl}}&=r_{\mathrm{cl}}=\frac12,\notag\\
c_{\mathrm{ea}}&=\frac{x}{1+x},\qquad
r_{\mathrm{ea}}=\frac1{1+x}.
\label{capuni:reference-data}
\end{align}
For either $\nu\in\{\mathrm{cl},\mathrm{ea}\}$, set
\begin{align}
\kappa_\nu&=-f_\nu''(c_\nu),\qquad
t_\nu=-\frac{f_\nu'(c_\nu)}{\kappa_\nu r_\nu},
\label{capuni:normalization}\\
B_\nu(z)
&=\frac{f_\nu'(c_\nu-r_\nu z)-f_\nu'(c_\nu)}
        {\kappa_\nu r_\nu z},
\label{capuni:kernel}\\
A_{\nu,n}&=\frac1n[z^{n-1}]B_\nu(z)^{-n},\qquad n\ge1,
\label{capuni:series-coefficients}
\end{align}
with $B_\nu(0)=1$ by continuity. The compact capacity formulas are
\begin{align}
C(\mathcal A_\gamma)
&=\frac{f_{\mathrm{cl}}(1/2)}{\ln2}\notag\\
&\quad+\frac{\kappa_{\mathrm{cl}}}{4\ln2}
 \sum_{n=1}^{\infty}\frac{A_{\mathrm{cl},n}}{n+1}
 t_{\mathrm{cl}}^{n+1},
\label{capuni:classical-compact}\\
C_E(\mathcal A_\gamma)
&=\frac{f_{\mathrm{ea}}(c_{\mathrm{ea}})}{\ln2}\notag\\
&\quad+\frac{\kappa_{\mathrm{ea}}r_{\mathrm{ea}}^2}{\ln2}
 \sum_{n=1}^{\infty}\frac{A_{\mathrm{ea},n}}{n+1}
 t_{\mathrm{ea}}^{n+1}.
\label{capuni:ea-compact}
\end{align}
Both converge absolutely for $0<\gamma<1$.

These are precisely Eqs.~\eqref{capuni:classical-series}
and~\eqref{capuni:ea-series} in normalized notation. Indeed,
direct substitution gives
\begin{align}
\Phi_{\mathrm C}(z)&=\frac{\kappa_{\mathrm{cl}}}{2}
 zB_{\mathrm{cl}}(z),\notag\\
\Phi_{\mathrm E}(z)&=\kappa_{\mathrm{ea}}r_{\mathrm{ea}}
 zB_{\mathrm{ea}}(z),
\label{capuni:kernel-correspondence}
\end{align}
and hence
\begin{align}
\delta_{\mathrm C}&=\frac{\kappa_{\mathrm{cl}}}{2}t_{\mathrm{cl}},
&c_n&=\left(\frac2{\kappa_{\mathrm{cl}}}\right)^n A_{\mathrm{cl},n},
\notag\\
\delta_{\mathrm E}&=\kappa_{\mathrm{ea}}r_{\mathrm{ea}}t_{\mathrm{ea}},
&e_n&=\frac{A_{\mathrm{ea},n}}
 {(\kappa_{\mathrm{ea}}r_{\mathrm{ea}})^n}.
\label{capuni:coefficient-correspondence}
\end{align}

\subsection{Common convergence argument}
\label{capuni:app-inversion}

We verify below that both objectives are strictly concave,
$|t_\nu|<1/2$, and
\begin{equation}
B_\nu(z)=\int_{-1}^{1}\frac{d\mu_\nu(s)}{1-sz},
\qquad |z|<1,
\label{capuni:moment-form}
\end{equation}
for a probability measure $\mu_\nu$.
Suppressing $\nu$, this representation implies, on $|z|=\rho<1$,
\begin{equation}
\operatorname{Re}B(z)\ge\frac1{1+\rho},\qquad
|zB(z)|\ge\frac{\rho}{1+\rho}.
\label{capuni:positive-real}
\end{equation}
Thus $B$ has no zeros in the unit disk. For $|t|<1/2$, choose $|t|/(1-|t|)<\rho<1$.
On $|z|=\rho$, one has $|t|<|zB(z)|$, so
Rouch\'e's theorem gives exactly one zero, counted
with multiplicity, of $zB(z)-t$ in $|z|<\rho$.
Letting $\rho$ increase to one establishes uniqueness
in the unit disk. The zero is simple, so the holomorphic implicit function
theorem and uniqueness give a holomorphic inverse $Z(t)$
for $|t|<1/2$, with $|Z(t)|<1$.
Lagrange-B\"urmann inversion and Cauchy's estimate yield
\begin{equation}
Z(t)=\sum_{n\ge1}A_nt^n,\qquad |A_n|\le2^n.
\label{capuni:inverse-series}
\end{equation}
For real $t_\nu$, the inverse is real. Since
$r_\nu\le\min\{c_\nu,1-c_\nu\}$, the stationarity condition and
strict concavity give the unique maximizing population
\begin{equation}
u_\nu=c_\nu-r_\nu\sum_{n\ge1}A_{\nu,n}t_\nu^n.
\label{capuni:common-population}
\end{equation}
Integration by parts using $Z(t)B(Z(t))=t$ gives
\begin{equation}
f_\nu(u_\nu)-f_\nu(c_\nu)
=\kappa_\nu r_\nu^2\int_0^{t_\nu}Z_\nu(t)\,dt.
\label{capuni:integrated-inverse}
\end{equation}
Termwise integration proves
Eqs.~\eqref{capuni:classical-compact} and~\eqref{capuni:ea-compact}.

Let $\mathcal C_{\nu,N}$ and $u_{\nu,N}$ retain the first $N$
terms of the corresponding series. With $\eta_\nu=2|t_\nu|<1$,
the coefficient bound gives
\begin{align}
|\mathcal C_\nu-\mathcal C_{\nu,N}|
&\le\frac{\kappa_\nu r_\nu^2|t_\nu|}{(\ln2)(N+2)}
 \frac{\eta_\nu^{N+1}}{1-\eta_\nu},
\label{capuni:capacity-remainder}\\
|u_\nu-u_{\nu,N}|
&\le r_\nu\frac{\eta_\nu^{N+1}}{1-\eta_\nu}.
\label{capuni:population-remainder}
\end{align}

\subsection{Verification for the two capacities}
\label{capuni:app-verification}

It is convenient to use
\begin{equation}
K_\nu(w)=\frac{f_\nu'(c_\nu-w)-f_\nu'(c_\nu)}{w}
=\kappa_\nu B_\nu(w/r_\nu).
\label{capuni:raw-kernel}
\end{equation}

\paragraph{Classical capacity.}
Put $v=2\sqrt b$. The derivative of the output entropy is
\begin{equation}
s(u)=v^2u\int_0^\infty\frac{dy}{1+vu\cosh y}.
\label{capuni:classical-s}
\end{equation}
Hence $s'(u)>0$ and
$f_{\mathrm{cl}}''(u)=-a/[u(1-au)]-s'(u)<0$.
For $c=1/2$, $\alpha=2a/(1+\gamma)$, and
$\xi(y)=v\cosh y/(1+vc\cosh y)$, direct subtraction gives
\begin{align}
K_{\mathrm{cl}}(w)
&=a\int_{-\alpha}^{2}\frac{d\xi}{1-\xi w}\notag\\
&\quad+v^2\int_0^\infty
 \frac{dy}{(1+vc\cosh y)^2[1-\xi(y)w]}.
\label{capuni:classical-raw-kernel}
\end{align}
Both measures are positive, $\alpha<2$, and $0<\xi(y)<2$.
Their total mass is $\kappa_{\mathrm{cl}}$; rescaling the support
by $\xi\mapsto\xi/2$ proves Eq.~\eqref{capuni:moment-form}.

For $d$ in Eq.~\eqref{capuni:classical-auxiliary}, set
$J=\int_0^1y^2(1-d^2y^2)^{-1}\,dy$.
The explicit parameter $\delta_{\mathrm C}$ obeys
\begin{align}
0<\frac{\delta_{\mathrm C}}a
&=2\gamma a\int_0^1
 \frac{y^2\,dy}{(1-d^2y^2)(1-\gamma^2y^2)}\notag\\
&\le\frac{2\gamma J}{1+\gamma},\notag\\
\frac{\kappa_{\mathrm{cl}}}{4a}
&=\frac1{1+\gamma}+\gamma J.
\label{capuni:classical-parameter-bound}
\end{align}
Since $bJ<1$, subtraction yields
\begin{equation}
\frac{\kappa_{\mathrm{cl}}}{4a}-\frac{\delta_{\mathrm C}}a
\ge\frac{1-bJ}{1+\gamma}>0.
\end{equation}
Thus $0<t_{\mathrm{cl}}=2\delta_{\mathrm C}/\kappa_{\mathrm{cl}}<1/2$.
The maximizing ensemble consists of the equiprobable signals
$\sqrt{1-u_{\mathrm{cl}}}\,|0\rangle
\pm\sqrt{u_{\mathrm{cl}}}\,|1\rangle$.

\paragraph{Entanglement-assisted capacity.}
Differentiating the mutual-information objective gives
\begin{equation}
-f_{\mathrm{ea}}''(u)
=\frac{a[2(1-u)+\gamma u^2]}
 {u(1-u)(1-au)(1-\gamma u)}>0.
\label{capuni:ea-curvature-function}
\end{equation}
With $c=c_{\mathrm{ea}}$, $r=1-c$, and
\begin{equation}
\alpha=\frac1r,\quad \beta=\frac{a}{1-ac},\quad
\chi=\frac{\gamma}{1-\gamma c},\quad \tau=\frac1c,
\label{capuni:ea-kernel-parameters}
\end{equation}
the logarithmic differences give
\begin{align}
K_{\mathrm{ea}}(w)
&=\int_0^\alpha\frac{d\xi}{1+\xi w}
 +a\int_0^\beta\frac{d\xi}{1+\xi w}\notag\\
&\quad-\gamma\int_0^\chi\frac{d\xi}{1+\xi w}
 +2a\int_0^\tau\frac{d\xi}{1-\xi w}.
\label{capuni:ea-raw-kernel}
\end{align}
The first three measures combine positively because $\chi\le\alpha$.
Moreover, $\beta,\chi,\tau\le\alpha=1/r$. Rescaling their supports
by $\xi\mapsto-r\xi$ and the last by $\xi\mapsto r\xi$, then
dividing by the total mass $\kappa_{\mathrm{ea}}$, proves
Eq.~\eqref{capuni:moment-form}.

To bound $t_{\mathrm{ea}}$, write
\begin{align}
T&=\frac{\gamma}{a}\ln\frac{1+ax}{\gamma},\notag\\
U&=\frac{f_{\mathrm{ea}}'(c)}a
 =L+\ln(1+\gamma x)-2\ln x-T,
\label{capuni:ea-U}\\
V&=\frac{\kappa_{\mathrm{ea}}r}{a}
 =\frac{1+\gamma+x}{1+ax}
  +\frac{a}{1+\gamma x}+\frac2x.
\label{capuni:ea-V}
\end{align}
Then $t_{\mathrm{ea}}=-U/V$. Let $L_0$ solve
$L_0-\ln(1+L_0)=1$; it satisfies $2<L_0<9/4$.
If $x=1$, then $\gamma<9/10$ and
$V=2(4+\gamma)/[(1+\gamma)(2-\gamma)]>3$.
The bounds $2\gamma<T<2$, together with $T\le\ln2$ for
$\gamma\le1/4$, give, by splitting at $\gamma=1/4$,
\begin{equation}
-\frac32<U<\ln19-\frac95<\frac32.
\end{equation}

If $x=L-\ln(1+L)>1$, then $L>L_0$ and the elementary
logarithm bounds $y/(1+y)\le\ln(1+y)\le y$ imply
\begin{equation}
0\le x+1-T\le e^{-L}(1+L+L^2)\le\frac7{e^2}<1.
\label{capuni:ea-T-bound}
\end{equation}
Also, $\gamma(1+L)>x$ and $(1+L)/x<13/4$, so
\begin{equation}
0<Y:=\ln\frac{(1+L)(1+\gamma x)}{x^2}
<\ln\frac{13}{2}<2.
\end{equation}
Since $U=Y+x-T$, this gives $-1<U<2$.
The function $ax=e^{-L}[L-\ln(1+L)]$ decreases for $L>L_0$,
so $ax<e^{-2}<1/7$ and $\gamma>6/7$. Consequently,
\begin{equation}
V>\frac{13+7x}{8}+\frac2x
\ge\frac{13}{8}+\sqrt7>4.
\end{equation}
Both branches therefore satisfy $|t_{\mathrm{ea}}|<1/2$.
The optimal input is the diagonal state $\rho_{u_{\mathrm{ea}}}$.
At $\gamma=1/2$, $f_{\mathrm{ea}}=H$, giving $C_E=1$;
the endpoint capacities follow from the identity and constant channels.
This completes the proofs of Theorems~\ref{capuni:thm-classical}
and~\ref{capuni:thm-ea}.

\subsection{Coefficients and quantum-capacity parameters}
\label{capuni:app-coefficients}

Writing $B_\nu(z)=1+\sum_{j\ge1}m_{\nu,j}z^j$, one has
\begin{equation}
m_{\nu,j}
=\frac{(-1)^{j+1}r_\nu^j f_\nu^{(j+2)}(c_\nu)}
       {\kappa_\nu(j+1)!},\qquad j\ge1.
\label{capuni:moments-derivatives}
\end{equation}
The coefficients in Eq.~\eqref{capuni:series-coefficients} are
the finite polynomials
\begin{equation}
A_{\nu,n}=\frac1n
\sum_{\boldsymbol\ell\in\mathcal L_n}
(-1)^{|\boldsymbol\ell|}(n)_{|\boldsymbol\ell|}
\prod_{j=1}^{n-1}\frac{m_{\nu,j}^{\ell_j}}{\ell_j!},
\label{capuni:finite-polynomials}
\end{equation}
where $\mathcal L_n$ consists of nonnegative integer tuples
satisfying $\sum_{j=1}^{n-1}j\ell_j=n-1$,
$|\boldsymbol\ell|=\sum_j\ell_j$, and
$(n)_k=\Gamma(n+k)/\Gamma(n)$ is the rising factorial.
The empty-tuple convention gives $A_{\nu,1}=1$;
the next coefficients are $A_{\nu,2}=-m_{\nu,1}$ and
$A_{\nu,3}=2m_{\nu,1}^2-m_{\nu,2}$.

For comparison with the quantum-capacity parameters, set
\begin{equation}
k_E=\frac{\gamma}{1-\gamma}=2\kappa_\gamma,\qquad
w_E=\frac{a u_{\mathrm{ea}}}{1-u_{\mathrm{ea}}}.
\end{equation}
Here $\kappa_\gamma$ is defined in
Eq.~\eqref{eq:effectiveparameters}. The EA stationarity condition is
\begin{equation}
w_E^2(1+w_E)^{k_E}=c_{k_E}(1+k_Ew_E),
\label{capuni:ea-quantum-root-comparison}
\end{equation}
whereas the quantum-capacity root at damping
$\gamma/(1+\gamma)$ satisfies
$(z_{k_E}-1)z_{k_E}^{k_E}=c_{k_E}$, with
$c_{k_E}$ from Eq.~\eqref{eq:kckQ}.
Thus the two problems share these parameters but have different
root equations; the classical and EA series instead share the
coefficient polynomials in Eq.~\eqref{capuni:finite-polynomials}.

\end{document}